 \gdef\xxxmark{%
   \expandafter\ifx\csname @mpargs\endcsname\relax 
     \expandafter\ifx\csname @captype\endcsname\relax 
       \marginpar{xxx}
     \else
       xxx 
     \fi
   \else
     xxx 
   \fi}
 \gdef\xxx{\@ifnextchar[\xxx@lab\xxx@nolab}
 \long\gdef\xxx@lab[#1]#2{{\bf [\xxxmark #2 ---{\sc #1}]}}
 \long\gdef\xxx@nolab#1{{\bf [\xxxmark #1]}}
\let\realbfseries=\bfseries
\def\bfseries{\realbfseries\boldmath}
\newif\ifabstract
\newif\iffull
\let\epsilon=\varepsilon
\newtheorem{theorem}{Theorem}
\newtheorem{definition}{Definition}
\newtheorem{lemma}{Lemma}
\newtheorem{fact}{Fact}
\newcommand{\X}{\mathcal{X}}
\newcommand{\Y}{\mathcal{Y}}
\newcommand{\Z}{\mathcal{Z}}
\renewcommand{\P}{\mathcal{P}}
\newcommand{\U}{\mathcal{U}}
\newcommand{\C}{\mathcal{C}}
\begin{document}
%

\title{A New Upper Bound on the Average Error Exponent for Multiple-Access Channels}

\author{
\authorblockN{Ali Nazari, S. Sandeep Pradhan and Achilleas Anastasopoulos\\}
\authorblockA{Electrical Engineering and Computer Science Dept. \\
University of Michigan, Ann Arbor, MI 48109-2122, USA\\
E-mail: \{anazari,pradhanv,anastas\}@umich.edu} 
\thanks{This work was supported by NSF ITR grant CCF-0427385.}}

\maketitle

\begin{abstract}
A new lower bound for the average probability or error for a
two-user discrete memoryless (DM) multiple-access channel (MAC) is
derived. This bound has a structure very similar to the well-known
sphere packing packing bound derived by Haroutunian. However, since
explicitly imposes independence of the users' input distributions
(conditioned on the time-sharing auxiliary variable) results in a
tighter sphere-packing exponent in comparison to Haroutunian's.
Also, the relationship between average and maximal error
probabilities is studied. Finally, by using a known sphere packing
bound on the maximal probability of error, a lower bound on the
average error probability is derived.
\end{abstract}

\begin{keywords}
Multiple-access channel, error exponents, Sphere Packing bound.
\end{keywords}

\section{Introduction} \label{intro}
One of the most important practical questions which arises when we
are designing or using an information transmission or processing
system is: How much information can this system transmit or process
in a given time? Information theory, developed by Claude E. Shannon
during World War II, defines the notion of channel capacity and
provides a mathematical model by which one can compute it.
Basically, Shannon coding theorem and all newer versions of it treat
the question of how much data can be reliably communicated from one
point, or sets of points, to another point or sets of points.

The class of channels to be considered include multiple transmitter
and a single receiver. The received signal is corrupted both by
noise and by mutual interference between the transmitters. Each of
transmitters is fed by an information source, and each information
source generates a sequence of messages. More specifically, a
two-user DM-MAC is defined by a stochastic matrix\footnote{We use
the following notation throughout this work. Script capitals
$\mathcal{U}$, $\mathcal{X}$, $\mathcal{Y}$, $\mathcal{Z}$,$\ldots$
denote finite, nonempty sets. To show the cardinality of a set
$\mathcal{X}$, we use $|\mathcal{X}|$. We also use the letters $P$,
$Q$,$\ldots$  for probability distributions on finite sets, and $U$,
$X$, $Y$,$\ldots$ for random variables.}
$W:\X \times \Y \rightarrow \Z$, where the input alphabets, $\X$,
$\Y$, and the output alphabet, $\Z$, are finite sets. The channel
transition probability for sequences of length $n$ is given by
\begin{align}
W^n\left(\mathbf{z}|\mathbf{x},\mathbf{y}\right) \triangleq
\prod_{i=1}^n W\left(z_i|x_i,y_i\right)
\end{align}
\indent where
\begin{align*}
\mathbf{x}\triangleq \left(x_1,...,x_n\right) \in \mathcal{X}^n,
\mathbf{y}\triangleq\left(y_1,...,y_n\right) \in \mathcal{Y}^n
\end{align*}
\indent and
\begin{align*}
\mathbf{z}\triangleq \left(z_1,...,z_n\right) \in \mathcal{Z}^n.
\end{align*}
It has been proven, by Ahlswede~\cite{Ahlswede71} and
Liao's~\cite{Liao} coding theorem, that for any
$\left(R_X,R_Y\right)$ in the interior of a certain set
$\mathcal{C}$, and for all sufficiently large $n$, there exists a
multiuser code with an arbitrary small average probability of error.
Conversely, for any $\left(R_X,R_Y\right)$ outside of $\mathcal{C}$,
the average probability of error is bounded away from 0. The set
$\mathcal{C}$, called \emph{capacity region} for $W$, is the closure
of the set of all rate pairs $\left(R_X,R_Y\right)$
satisfying~\cite{SlWo73}
\begin{subequations}
\begin{align}
0 &\leq R_X \leq I\left(X \wedge Z|Y,U\right)\\
0 &\leq R_Y \leq I\left(Y \wedge Z|X,U\right)\\
0 &\leq R_X+R_Y \leq I\left(XY \wedge Z|U\right),
\end{align} 
\end{subequations}
%
for all choices of joint distributions over the random variables
$U,\ X,\ Y,\ Z$ of the form
$p\left(u\right)p\left(x|u\right)p\left(y|u\right)W\left(z|x,y\right)$
with $U \in \U$ and $|\mathcal{U}| \leq 4$. As we can see, this
theorem was presented in an asymptotic nature, i.e., it was proven
that the error probability of the channel code can go to zero as the
block length goes to infinity. It does not tell us how large the
block length must be in order to achieve a specific error
probability. On the other hand, in practical situations, there are
limitations on the delay of the communication. Additionally, the
block length of the code cannot go to infinity. Therefore, it is
important to study how the probability of error drops as the block
length goes to infinity. A partial answer to this question is
provided by examining the error exponent of the channel.

Error exponents have been studied for discrete memoryless
multiple-access channels over the past thirty years. Lower and upper
bounds are known on the error exponent of these channels. The random
coding bound in information theory provides a well-known lower bound
for the reliability function of the best code, of a given rate and
block length. This bound is constructed by upper-bounding the
average error probability over an ensemble of codes. Slepian and
Wolf~\cite{SlWo73}, Dyachkov~\cite{Dyachkov},
Gallager~\cite{Gallager-Multiaccess}, Pokorny and
Wallmeier~\cite{Pokorney}, and Liu and
Hughes~\cite{Liu-RandomCoding} have all studied the random coding
bound for discrete memoryless multiple access channels. Nazari and
et al.~\cite{aransspaa09} investigated two different upper bounds on
the average probability of error, called the typical random coding
bound and the partial expurgated bound. The typical bound is
basically the typical performance of the ensemble. By this, we mean
that almost all random codes exhibit this performance. In addition,
they have shown that the typical random code performs better than
the average performance over the random coding ensemble, at least,
at low rates. The random coding exponent may be improved at low
rates by a process called ``partial expurgation'' which yields a new
bound that exceeds the random coding bound at low rates.

Haroutunian~\cite{Haroutunian} and
Nazari~\cite{nazari08,nazaridistance09} studied upper bounds on the
error exponent of multiple access channels. In Multi-user
information theory, the sphere packing bound provides a well known
upper bound on the reliability function for multiple access channel.
The sphere packing bound that Haroutunian~\cite{Haroutunian} derived
on the average error exponent for DM-MAC is potentially loose, as it
does not capture the separation of the encoders in the MAC. Nazari
et al.~\cite{nazari08} derived another sphere packing bound which
takes into account separation of the encoders. The bound
in~\cite{nazari08} turns out to be at least as good as the bound
derived in~\cite{Haroutunian}, however it is a valid bound only for
the maximal error exponent and not the average. The sphere packing
bound is a good bound in high rate regime. Nevertheless, it tends to
be a loose bound in low rate regime. It can be shown that in low
rate regime, the minimum distance of the code dominates the
probability of error. Using the minimum distance of the code,
Nazari~\cite{nazaridistance09} derived another upper bound for the
maximal error exponent of DM-MAC. To derive the minimum distance
bound, they established a connection between the minimum distance of
the code and the maximum probability of error; then, by obtaining an
upper bound on the minimum distance of all codes with certain rates,
they derived a lower bound on the maximal error probability that can
be obtained by a code with a certain rate pair.

The paper is organized as follows. Some preliminaries are introduced
in section~\ref{prelim}. The main result of the paper, which is an
upper bound on the reliability function of the channel, is obtained
in section~\ref{SpherePack}. In section~\ref{MaxvsAvg}, by using a
known upper bound on the maximum error exponent function, we derive
an upper bound on the average error exponent function. The proofs of
some of these results are given in the Appendix.

\section{Preliminaries}\label{prelim}

For any alphabet $\mathcal{X}$, $\mathcal{P\left(X\right)}$ denotes
the set of all probability distributions on $\mathcal{X}$. The
\emph{type} of a sequence $\mathbf{x}=\left(x_1,...,x_n\right) \in
\mathcal{X}^n$ is the distributions $P_{\mathbf{x}}$, on
$\mathcal{X}$, defined by:
\begin{align}
P_{\mathbf{x}}\left(x\right)\triangleq
\frac{1}{n}N\left(x|\mathbf{x}\right),\;\;\;\;\;\;\;\;\;\;\;\;\;\;\;
x \in \mathcal{X},
\end{align}
where $N\left(x|\mathbf{x}\right)$ denotes the number of occurrences
of $x$ in $\mathbf{x}$. Let $\mathcal{P}_n \left(\mathcal{X}\right)$
denotes the set of all types in $\mathcal{X}^n$, and define the set
of all sequences in $\mathcal{X}^n$ of type $P$ as
\begin{align}
T_P \triangleq \{\mathbf{x} \in \mathcal{X}^n: P_{\mathbf{x}}=P\}.
\end{align}
The joint type of a pair $\left(\mathbf{x},\mathbf{y}\right) \in
\mathcal{X}^n \times \mathcal{Y}^n$ is the probability distribution
$P_{\mathbf{x},\mathbf{y}}$ on $\mathcal{X} \times \mathcal{Y}$
defined by:
\begin{align}
P_{\mathbf{x},\mathbf{y}}\left(x,y\right)\triangleq
\frac{1}{n}N\left(x,y|\mathbf{x},\mathbf{y}\right),\;\;\;\;\;\;\;\;\;\;\;\;\;\;\;
\left(x,y\right) \in \mathcal{X} \times \mathcal{Y},
\end{align}
where $N\left(x,y|\mathbf{x},\mathbf{y}\right)$ is the number of
occurrences of $\left(x,y\right)$ in ($\mathbf{x},\mathbf{y}$). The
relative entropy or \emph{Kullback-Leibler} distance between two
probability distribution $P,Q \; \in \mathcal{P\left(X\right)}$ is
defined as
\begin{align}
D\left(P||Q\right) \triangleq \sum_{x \in
\mathcal{X}}P\left(x\right)\log\frac{P\left(x\right)}{Q\left(x\right)}.
\end{align}
Let $\mathcal{W\left(Y|X\right)}$ denote the set of all stochastic
matrices with input alphabet $\mathcal{X}$ and output alphabet
$\mathcal{Y}$. Then, given stochastic matrices $V,\ W \in
\mathcal{W\left(Y|X\right)}$, the conditional \emph{I-divergence} is
defined by
\begin{align}
D\left(V||W|P\right) \triangleq \sum_{x \in
\mathcal{X}}P\left(x\right)D\left(V\left(\cdot|x\right)||W\left(\cdot|x\right)\right).
\end{align}

\begin{definition}
An $\left(n,M,N\right)$ multi-user code is a set
$\{\left(\mathbf{x}_i,\mathbf{y}_j,D_{ij}\right): 1 \leq i \leq M, 1
\leq j \leq N\}$ with
\begin{itemize}
\item $\mathbf{x}_i \in \mathcal{X}^n$, $\mathbf{y}_j \in
\mathcal{Y}^n$, $D_{ij} \subset \mathcal{Z}^n$
\item $D_{ij} \cap D_{i'j'}=\varnothing$ for  $\left(i,j\right) \neq
\left(i',j'\right)$.
\end{itemize}
The average error probability of this code for the MAC, $W:\X \times
\Y \rightarrow \Z$, is defined as
\begin{align}
e\left(\C,W\right) \triangleq \frac{1}{M
N}\sum_{i=1}^{M}\sum_{j=1}^{N}W^n\left(D_{i,j}^c|\mathbf{x}_i,\mathbf{y}_j\right).
\end{align}
Similarly, the maximal error probability of this code for $W$ is
defined as
\begin{align}
e_m\left(\C,W\right) \triangleq \max_{\substack{\left(i,j\right)}}
W^n\left(D_{i,j}^c|\mathbf{x}_i,\mathbf{y}_j\right).
\end{align}
\end{definition}
\begin{definition}
For the MAC, $W:\X \times \Y \rightarrow \Z$, the average and
maximal error reliability functions, at rate pair
$\left(R_X,R_Y\right)$, are defined as:
\begin{align}
&E^*_{av}\left(R_X,R_Y\right) \triangleq \lim_{n\rightarrow \infty}
\max_{\substack{\C}} -\frac{1}{n}
\log{e\left(\C,W\right)}\\
&E^*_{m}\left(R_X,R_Y\right) \triangleq \lim_{n\rightarrow \infty}
\max_{\substack{\C}} -\frac{1}{n} \log{e_m\left(\C,W\right)},
\end{align}
where the maximum is over all codes of length $n$ and rate pair
$\left(R_X,R_Y\right)$.
\end{definition}

\begin{definition}
 A code $\C_X =\{ \mathbf{x}_i \in \X^n
:\;\; i=1,...,M_X \}$, for some $P_X$, is called a bad codebook, if
\begin{eqnarray}
\exists \;\left(i,j\right),\;\;\;\; i \neq j \;\;\; &
\mathbf{x}_i=\mathbf{x}_j
\end{eqnarray}
A codebook which is not bad, is called a good one.
\end{definition}
\begin{definition}
 A multi user code $\C=\C_X \times \C_Y$ is
called a good multi user code, if both individual codebooks $\C_X$,
$\C_Y$ are good codes.
\end{definition}
\begin{definition}
For a good multi user code $\C=\C_X \times \C_Y$, and for a
particular type $P_{XY} \in \mathcal{P}_n\left(\mathcal{X}\times
\mathcal{Y}\right)$, we define
\begin{eqnarray}
R\left(\C,P_{XY}\right) \triangleq \frac{1}{n} \log|\C \cap
T_{P_{XY}}|
\end{eqnarray}
\end{definition}

\begin{definition}
For a sequence of joint types $P^n_{XY} \in \P_n\left(\X \times
\Y\right)$, with marginal types $P^n_X$ and $P^n_Y$, the sequence of
type graphs, $G_n$, is defined as follows: For every $n$, $G_n$ is a
bipartite graph, with its left vertices consisting of all $x^n \in
T_{P^n_X}$ and the right vertices consisting of  all $y^n \in
T_{P^n_Y}$. A vertex on the left (say $\tilde{x}^n$) is connected to
a vertex on the right (say $\tilde{y}^n$) if and only if
$\left(\tilde{x}^n,\tilde{y}^n\right) \in T_{P^n_{XY}}$.
\end{definition}

\section{main result}\label{SpherePack}
The main result of this section is a new sphere packing bound for
the average error probability for a discrete memoryless multiple
access channel. The idea behind the derivation of this bound is
based on the property that is common among all good multi user codes
with certain rate pair. In the following, we first derive a sphere
packing bound for a good multiuser code. Next, we show that for any
bad multiuser code, there exists a good code with the same rate pair
and smaller average probability of error. Therefore, to obtain a
lower bound for the average error probability for the best code, we
only need to study good codes (codes without any repeated
codewords).

Now, consider a good multiuser code with blocklength $n$. Suppose
the number of messages of the first source is $M_X=2^{nR_X}$ and the
number of messages of the second source is $M_Y=2^{nR_Y}$. Assume
that all the messages of any source are equiprobable and the sources
are sending data independently. Considering these assumptions, all
$M_XM_Y$ pairs are occuring with the equal probability. Thus, at the
input of the channel, we can see all possible
$2^{n\left(R_X+R_Y\right)}$ (an exponentially increasing function of
$n$) pairs of input sequences. However, we also know that the number
of possible types is a polynomial function of $n$. Thus, for at
least one joint type, the number of pairs of sequences in the multi
user code sharing that particular type, should be an exponential
function of $n$ with the rate arbitrary close to the rate of the
multi user code. We will look at these pairs of sequences as a
subcode, and then try to find a lower bound for the average error
probability of this subcode. Following, we will show that this bound
is a valid lower bound for the average probability of error for the
original code.

\begin{lemma}~\cite{nazaridistance09}
For any $\delta > 0$, for any sufficiently large $n$, and for any
good $\left(n,2^{nR_X},2^{NR_Y}\right)$ multi user code $\C$, as
defined above, there exists $P_{XY} \in
\mathcal{P}_n\left(\mathcal{X}\times \mathcal{Y}\right)$ such that
\begin{eqnarray*}
R\left(\C,P_{XY}\right) \geq
R_X+R_Y-\delta\;\;\;\;\;\;\;\;\;\;\;\;\; \text{for sufficiently
large n},
\end{eqnarray*}
$P_{XY}$ is called a dominant type of $\C$.
\end{lemma}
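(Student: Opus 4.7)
The plan is to apply a standard pigeonhole argument based on the polynomial bound on the number of types. The set $\C = \C_X \times \C_Y$ contains exactly $M_X M_Y = 2^{n(R_X+R_Y)}$ codeword pairs. Every such pair $(\mathbf{x}_i, \mathbf{y}_j)$ has some joint type in $\P_n(\X \times \Y)$, so these pairs are partitioned among the type classes indexed by joint types.

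The key counting input is the well-known bound $|\P_n(\X \times \Y)| \leq (n+1)^{|\X||\Y|}$, which is polynomial in $n$. By pigeonhole, at least one joint type $P_{XY}$ must satisfy
\begin{equation*}
|\C \cap T_{P_{XY}}| \;\geq\; \frac{M_X M_Y}{|\P_n(\X \times \Y)|} \;\geq\; \frac{2^{n(R_X+R_Y)}}{(n+1)^{|\X||\Y|}}.
\end{equation*}
Taking $\frac{1}{n}\log$ of both sides yields
\begin{equation*}
R(\C, P_{XY}) \;\geq\; R_X + R_Y - \frac{|\X||\Y|\log(n+1)}{n}.
\end{equation*}

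Since the correction term $\frac{|\X||\Y|\log(n+1)}{n}$ tends to zero as $n\to\infty$, for any fixed $\delta > 0$ there is some $n_0$ such that the correction is below $\delta$ whenever $n \geq n_0$. Choosing this $P_{XY}$ as the dominant type completes the argument.

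The proof is essentially immediate once one observes that goodness of $\C$ is used only to ensure that the $M_X M_Y$ pairs are genuinely distinct (so that the counting in the left-hand side is not distorted by repetitions); there is no real obstacle, only the routine bookkeeping of the method of types. I would state the proof in one short paragraph, citing the polynomial bound on $|\P_n(\X \times \Y)|$ as the only nontrivial ingredient.
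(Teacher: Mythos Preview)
Your argument is correct and matches the paper's approach: the paper does not give a formal proof of this lemma (it is cited from~\cite{nazaridistance09}), but the paragraph immediately preceding the lemma sketches exactly the same pigeonhole reasoning---exponentially many pairs distributed among polynomially many joint types forces one type to carry rate arbitrarily close to $R_X+R_Y$. Your write-up simply makes that sketch precise.
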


Hence, for any good code, there must exist at least a joint type
which dominates the codebook. We can ask the following question: for
a multiuser code, with rate $\left(R_X,R_Y\right)$, can any joint
type potentially be its dominant type? As shown later, the answer to
this question helps us characterize a tighter sphere packing bound.
In response to this question, Nazari et al.~\cite{nazaridistance09}
studied the type graphs for different joint types and proved the
following result:

\begin{lemma}\cite{nazaridistance09} \label{U-Lemma}
For all sequences of nearly complete subgraphs of a particular type
graph $T_{P_{XY}}$, the rates of the subgraph $\left(R_X,R_Y\right)$
must satisfy
\begin{eqnarray}
 R_X \leq H\left(X|U\right), \: R_Y \leq H\left(Y|U\right)
\end{eqnarray}
for some $P_{U|XY}$ such that $X-U-Y$.
\end{lemma}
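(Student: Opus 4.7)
The plan is to realize the auxiliary variable $U$ as a time-sharing index and to exploit the product structure of a uniformly chosen pair from a nearly complete bipartite subgraph. Let $A \subseteq T_{P^n_X}$ and $B \subseteq T_{P^n_Y}$ denote the left and right vertex sets of the subgraph, so that $|A| \approx 2^{nR_X}$, $|B| \approx 2^{nR_Y}$, and by near-completeness the fraction of pairs $(x^n,y^n) \in A \times B$ whose joint type differs from $P^n_{XY}$ tends to zero with $n$.

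First, I would construct a joint law on $(X, Y, U)$ as follows: draw $X^n$ uniformly from $A$ and, independently, draw $Y^n$ uniformly from $B$; let $T$ be a time index uniform on $\{1,\ldots,n\}$ and independent of $(X^n, Y^n)$. Set $X \triangleq X_T$, $Y \triangleq Y_T$, and $U \triangleq T$. Because $X^n$ and $Y^n$ are independent, conditioning on $T=t$ makes $X_t$ and $Y_t$ functions of independent variables, so the Markov chain $X - U - Y$ holds automatically.

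Second, I would check that the marginal of $(X,Y)$ is (asymptotically) $P_{XY}$: since $X^n \in T_{P^n_X}$ and $Y^n \in T_{P^n_Y}$ exactly, the $X$-marginal is $P_X$ and the $Y$-marginal is $P_Y$, while the joint law of $(X_T, Y_T)$ coincides with the expected empirical joint distribution of $(X^n, Y^n)$ over $A \times B$, which lies within $o(1)$ of $P_{XY}$ in total variation by the near-completeness hypothesis. Third, the rate bound follows from the entropy chain rule: since $X^n$ is uniform on $A$,
\begin{equation*}
nR_X \;=\; H(X^n) \;=\; \sum_{i=1}^n H(X_i \mid X^{i-1}) \;\leq\; \sum_{i=1}^n H(X_i) \;=\; n\, H(X_T \mid T),
\end{equation*}
so $R_X \leq H(X \mid U)$; the symmetric computation applied to $Y^n$ gives $R_Y \leq H(Y \mid U)$.

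The main obstacle I expect is converting the qualitative \emph{near-completeness} hypothesis into a clean limiting statement. The alphabet of $U = T$ grows with $n$, and the induced law of $(X,Y)$ only approaches $P_{XY}$, so the concluding step must extract a convergent subsequence of the joint distributions of $(X,Y,U)$ (after a standard Caratheodory-type collapse of $U$ to a bounded alphabet) and appeal to continuity of conditional entropy to preserve the Markov chain, the target marginal, and the two entropy inequalities simultaneously. This is precisely the point where the rate at which the fraction of non-edges vanishes must dominate the polynomial number of types in $\mathcal{P}_n(\mathcal{X} \times \mathcal{Y})$, so that the limit object genuinely has $(X,Y) \sim P_{XY}$ and $X-U-Y$.
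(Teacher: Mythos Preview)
The paper does not supply a proof of this lemma; it is quoted verbatim from~\cite{nazaridistance09} and used as a black box (the same is true of Lemma~1). Consequently there is nothing in the present paper to compare your argument against.

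On its own merits, your outline is the standard ``time-sharing converse'' construction and is sound. The key identities are exactly the ones you wrote: independence of $X^n$ and $Y^n$ forces $X_T\perp Y_T\mid T$, the law of $(X_T,Y_T)$ equals the expected empirical joint type over $A\times B$, and the chain rule plus dropping of conditioning yields $R_X\le H(X_T\mid T)$ and $R_Y\le H(Y_T\mid T)$. The difficulty you flag---that $U=T$ has growing alphabet and the $(X,Y)$-marginal is only asymptotically $P_{XY}$---is genuine but is handled by the routine support-reduction/compactness step you describe: preserve the $|\mathcal X||\mathcal Y|-1$ marginal constraints together with the two conditional entropies via the support lemma, then pass to a convergent subsequence; conditional independence is a closed condition and conditional entropy is continuous on a fixed finite alphabet, so both the Markov chain and the rate inequalities survive in the limit. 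That is precisely how the analogous step is carried out in the MAC capacity converse, and it is almost certainly how the cited reference proceeds as well.
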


Now consider a particular joint type $P^n_{XY}$. By the previous
lemma, if there  does not exist any $P_{U|XY}$ satisfying the
constraint mentioned in lemma~\ref{U-Lemma}, the type graph
corresponding to this joint type can not contain an almost fully
connected subgraph with rate $\left(R_X,R_Y\right)$. Consequently,
it cannot be the dominant type of a good multiuser code with rate
$\left(R_X,R_Y\right)$.

\begin{fact} \label{U2-Lemma}
Consider a good multiuser code $\C$ with parameter
$\left(n,2^{nR_X},2^{nR_Y}\right)$. A joint type $P^n_{XY} \in
\P_n\left(\X \times \Y\right)$ can be the dominant type of $\C$ if
there exists a $P_{U|XY}$, $X-U-Y$, such that
\begin{eqnarray}
R_X \leq H\left(X|U\right), \: R_Y \leq H\left(Y|U\right),
\end{eqnarray}
conversely, if it does not exist such a conditional distribution,
then $P^n_{XY}$ cannot be the dominant type of any good multiuser
code with parameter $\left(n,2^{nR_X},2^{nR_Y}\right)$.
\end{fact}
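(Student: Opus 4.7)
The plan is to handle the two directions separately: the converse follows quickly from Lemma~\ref{U-Lemma}, while the forward (constructive) direction requires a random-coding argument on conditional type classes.

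\emph{Converse direction.} Suppose $P^n_{XY}$ is a dominant type of some good code $\C=\C_X\times\C_Y$, so $|\C\cap T_{P^n_{XY}}|\ge 2^{n(R_X+R_Y-\delta)}$. Projecting the pairs in $\C\cap T_{P^n_{XY}}$ onto their $X$- and $Y$-coordinates produces a bipartite subgraph $H$ of the type graph $G_n$ with at least $2^{n(R_X+R_Y-\delta)}$ edges and at most $2^{nR_X}$ left, $2^{nR_Y}$ right vertices. Since the edge count cannot exceed the product of the two vertex counts, both vertex-side rates of $H$ must lie within $\delta$ of $R_X$ and $R_Y$ respectively; equivalently, $H$ is a nearly complete subgraph of $T_{P^n_{XY}}$ in the sense of Lemma~\ref{U-Lemma}. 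That lemma then produces a $P_{U|XY}$ with $X-U-Y$ satisfying $R_X-\delta\le H(X|U)$ and $R_Y-\delta\le H(Y|U)$; a standard limiting argument ($\delta\to 0$, using compactness of the simplex of Markov conditional distributions on a finite auxiliary alphabet together with continuity of entropy) extracts the required $P_{U|XY}$.

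\emph{Forward direction.} Given $P_{U|XY}$ with $X-U-Y$ and the two entropy inequalities, I would build a good code at random on conditional type classes. Form $P_{UXY}=P^n_{XY}\cdot P_{U|XY}$, rounding to a nearby $n$-type if needed, and fix a template $u^n\in T_{P_U}$. Draw $\C_X$ uniformly at random as a $2^{nR_X}$-subset of $T_{P_{X|U}}(u^n)$ and $\C_Y$ uniformly as a $2^{nR_Y}$-subset of $T_{P_{Y|U}}(u^n)$; these subsets exist because $|T_{P_{X|U}}(u^n)|\doteq 2^{nH(X|U)}\ge 2^{nR_X}$ and similarly for $Y$, and sampling without replacement makes $\C=\C_X\times\C_Y$ automatically good.

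The main step, and the place where I expect the most bookkeeping, is to show that a realization exists with $R(\C,P^n_{XY})\ge R_X+R_Y-\delta$. The Markov property is decisive here: by standard method-of-types counts, the number of pairs $(x^n,y^n)\in T_{P_{X|U}}(u^n)\times T_{P_{Y|U}}(u^n)$ whose joint $(u^n,x^n,y^n)$-type equals $P_{UXY}$ (equivalently, whose $(x^n,y^n)$-joint type equals $P^n_{XY}$) is $\doteq 2^{nH(XY|U)}=2^{n(H(X|U)+H(Y|U))}$, where the last equality uses $X-U-Y$. Since this matches the full size of $T_{P_{X|U}}(u^n)\times T_{P_{Y|U}}(u^n)$ up to subexponential factors, the expected fraction of pairs in $\C_X\times\C_Y$ whose joint type is $P^n_{XY}$ is $1-o(1)$, and the expected count is $2^{n(R_X+R_Y)-o(n)}$. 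Fixing any realization that meets this expectation certifies $P^n_{XY}$ as a dominant type of the constructed good code, completing the argument.
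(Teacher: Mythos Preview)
The paper does not actually supply a proof for this statement: it is labeled a \emph{Fact} and presented without argument, apart from the single paragraph immediately preceding it, which sketches the converse direction by invoking Lemma~\ref{U-Lemma}. Your converse argument is a fleshed-out version of exactly that sketch, so on that half you match the paper.

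Your forward (constructive) direction has no counterpart in the paper at all; the authors simply assert it. Your random-coding construction on conditional type classes is a natural way to fill the gap and is essentially correct. Two small imprecisions are worth tightening. First, the parenthetical ``equivalently'' is not quite right: having $(u^n,x^n,y^n)$-type $P_{UXY}$ implies $(x^n,y^n)$-type $P^n_{XY}$, but not conversely, so what you are really doing is lower-bounding the count you need, which is fine for the argument. Second, the fraction of pairs in $T_{P_{X|U}}(u^n)\times T_{P_{Y|U}}(u^n)$ with the correct joint type is only guaranteed to be at least $(n+1)^{-|\U||\X||\Y|}$, not $1-o(1)$; but that polynomial loss is absorbed by the $\delta$ in the definition of dominant type, so the conclusion survives.

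It is also worth noting that the paper's subsequent use of this Fact (in Theorem~\ref{packing}) relies only on the converse direction: to upper-bound the reliability function one needs that every good code's dominant type admits such a $P_{U|XY}$, not that every admissible $P_{XY}$ is actually realized by some code. So your work on the forward direction, while a reasonable supplement, goes beyond what the paper itself either needs or proves.
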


\begin{theorem}\label{fixedtypepacking}
Fix any $R_X \geq 0$, $R_Y \geq 0$, $\delta > 0$ and a sufficiently
large $n$. Consider a good multiuser code $\C$ with parameter
$\left(n,2^{nR_X},2^{nR_Y}\right)$ which has a dominant type
$P_{XY}^* \in \mathcal{P}_n \left(\mathcal{X} \times
\mathcal{Y}\right)$. The average error exponent of such a code is
bounded above by
\begin{eqnarray}
E_{sp}\left(R_X,R_Y,W\right) \triangleq \min_{V_{Z|XY}}
D\left(V_{Z|XY}||W|P^*_{XY}\right).
\end{eqnarray}
Here, the minimization is over all possible conditional
distributions $V_{Z|XY}:\X \times \Y \rightarrow \Z$, which satisfy
at least one of the following conditions
\begin{eqnarray}
I_V\left(X \wedge Z|Y\right) &\leq& R_X\\
I_V\left(Y \wedge Z|X\right) &\leq& R_Y\\
I_V\left(XY \wedge Z\right) &\leq& R_X+R_Y.
\end{eqnarray}

\end{theorem}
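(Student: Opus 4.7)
The plan is to peel off from $\C$ a constant-composition sub-code supported on the dominant type $P^*_{XY}$, prove a sphere-packing bound for that sub-code, and then transfer the bound to $\C$ by a negligible rate-loss argument. Let $S=\{(i,j):(\mathbf{x}_i,\mathbf{y}_j)\in T_{P^*_{XY}}\}$ and write $\C'=\{(\mathbf{x}_i,\mathbf{y}_j,D_{ij}):(i,j)\in S\}$. The dominance hypothesis gives $|S|\geq 2^{n(R_X+R_Y-\delta)}$, so
\begin{align*}
e(\C,W)\;\geq\;\frac{1}{M_XM_Y}\sum_{(i,j)\in S} W^n(D_{ij}^c|\mathbf{x}_i,\mathbf{y}_j)\;\geq\;2^{-n\delta}\cdot\frac{1}{|S|}\sum_{(i,j)\in S}W^n(D_{ij}^c|\mathbf{x}_i,\mathbf{y}_j),
\end{align*}
and it suffices to show that the sub-code average error on the right is at least $2^{-n(E_{sp}(R_X,R_Y,W)+o(1))}$ and then send $\delta\downarrow 0$.

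For the sub-code I would invoke the classical change-of-measure against an auxiliary channel $V_{Z|XY}$. Because every pair in $\C'$ has the same joint type $P^*_{XY}$, for any $(i,j)\in S$ and any $\mathbf{z}\in T_V(\mathbf{x}_i,\mathbf{y}_j)$ the product form gives
\begin{align*}
W^n(\mathbf{z}|\mathbf{x}_i,\mathbf{y}_j)=V^n(\mathbf{z}|\mathbf{x}_i,\mathbf{y}_j)\cdot 2^{-nD(V\|W|P^*_{XY})},
\end{align*}
and therefore $W^n(D_{ij}^c|\mathbf{x}_i,\mathbf{y}_j)\geq 2^{-nD(V\|W|P^*_{XY})}\,V^n\bigl(T_V(\mathbf{x}_i,\mathbf{y}_j)\setminus D_{ij}\,\big|\,\mathbf{x}_i,\mathbf{y}_j\bigr)$. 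Averaging over $(i,j)\in S$ and minimizing over $V$ in the union of the three feasibility sets in the statement, the entire argument reduces to showing that for every such $V$ the $V$-averaged residual mass on $\C'$ is at least $2^{-o(n)}$.

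The hard step, and the crux of the proof, is this residual-mass lower bound, which splits into three parallel volume-counting estimates. Take first $V$ with $I_V(X\wedge Z|Y)\leq R_X-\eta$ for a small slack $\eta>\delta$, and slice $S$ by the second coordinate via $S_j=\{i:(i,j)\in S\}$. Because $T_V(\mathbf{x}_i,\mathbf{y}_j)\subseteq T_{V_{Z|Y}}(\mathbf{y}_j)$, $|T_{V_{Z|Y}}(\mathbf{y}_j)|\leq 2^{nH_V(Z|Y)}$, the decoding sets are disjoint, and $V^n(\mathbf{z}|\mathbf{x}_i,\mathbf{y}_j)=2^{-nH_V(Z|XY)}$ uniformly on each $T_V(\mathbf{x}_i,\mathbf{y}_j)$, one obtains
\begin{align*}
\sum_{i\in S_j}V^n\bigl(T_V(\mathbf{x}_i,\mathbf{y}_j)\cap D_{ij}\,\big|\,\mathbf{x}_i,\mathbf{y}_j\bigr)\;\leq\;2^{nI_V(X\wedge Z|Y)}.
\end{align*}
Summing over the at most $M_Y=2^{nR_Y}$ nonempty slices bounds the total $V$-mass of correct-decoding events by $2^{n(R_Y+I_V(X\wedge Z|Y))}\leq 2^{n(R_X+R_Y-\eta)}$, which is less than $|S|/2$ once $\eta>\delta$, so the $V$-residual mass is at least $1/2$ for large $n$. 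The case $I_V(Y\wedge Z|X)\leq R_Y-\eta$ is symmetric after slicing by the first coordinate, and $I_V(XY\wedge Z)\leq R_X+R_Y-\eta$ needs no slicing at all because all outputs lie in $T_{V_Z}$ of size $\leq 2^{nH_V(Z)}$, giving total correct-decoding mass $\leq 2^{nI_V(XY\wedge Z)}\leq|S|/2$. A continuity argument in $V$ (taking $\eta\downarrow 0$ within each closed feasibility set, using continuity of $V\mapsto D(V\|W|P^*_{XY})$) then yields the advertised infimum $\min_V D(V\|W|P^*_{XY})=E_{sp}(R_X,R_Y,W)$.
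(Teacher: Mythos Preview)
Your proof is correct and follows essentially the same route as the paper's: restrict to the sub-code on the dominant joint type $P^*_{XY}$, then for each of the three rate constraints perform a type-based volume count showing that the disjoint decoding regions can cover only an exponentially small fraction of the $V$-typical shells, forcing average error at least $2^{-nD(V\|W|P^*_{XY})+o(n)}$. Your packaging via the exact change-of-measure identity $W^n=V^n\cdot 2^{-nD(V\|W|P^*_{XY})}$ on $T_V$ and direct use of the disjointness of the $D_{ij}$ (bounding $\sum_{i\in S_j}|D_{ij}\cap T_V|\le|T_{V_{Z|Y}}(\mathbf{y}_j)|$) is a bit cleaner than the paper's, which first passes to the coarser ``row'' regions $D_i=\bigl(\bigcup_{j'}\bigcup_{i'\neq i}D_{i',j'}\bigr)^c$ before doing the same shell count, but the two arguments are otherwise identical in substance.
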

\begin{proof}
The proof is provided in Appendix~A.1.
\end{proof}

In theorem~\ref{fixedtypepacking}, we obtained a sphere packing
bound on the average error exponent for a good multiuser code with a
certain dominant type. For a more general code, we do not know the
dominant type of the code. However, we do have the condition for a
joint type to be the potential dominant type of a code with certain
parameter. By combining the result of theorem~\ref{fixedtypepacking}
and fact~\ref{U2-Lemma}, we can obtain the following sphere packing
bound for any good multiuser code:

\begin{theorem}\label{packing}
For any given multiple access channel $W$ and any good multi user
code with rate pair ($R_X$,$R_Y$), the reliability function,
$E\left(R_X,R_Y,W\right)$, is bounded above by
\begin{eqnarray}
E_{sp}\left(R_X,R_Y,W\right) \triangleq \max_{P_{UXY}}
\min_{V_{Z|XY}} D\left(V_{Z|XY}||W|P_{XY}\right).
\end{eqnarray}
Here, the maximum is taken over all possible joint distributions
satisfying $X-U-Y$ and
\begin{eqnarray}
 R_X \leq H\left(X|U\right), \: R_Y \leq H\left(Y|U\right),
\end{eqnarray}
 and the minimum over all channels $V_{Z|XY}$ that satisfy at
least one of the following conditions
\begin{eqnarray}
I_V\left(X \wedge Z|Y\right) &\leq& R_X\\
I_V\left(Y \wedge Z|X\right) &\leq& R_Y\\
I_V\left(XY \wedge Z\right) &\leq& R_X+R_Y.
\end{eqnarray}
\end{theorem}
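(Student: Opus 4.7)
The plan is to chain together the three ingredients already assembled in the paper (Lemma 1, Fact 1, Theorem 1) and then uniformize the resulting estimate over the dominant type, which depends on the code and is a priori unknown.

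First I would fix $\delta>0$, a large blocklength $n$, and an arbitrary good $(n,2^{nR_X},2^{nR_Y})$ multi-user code $\C$. By Lemma 1 the code $\C$ admits at least one dominant joint type $P^*_{XY}\in\P_n(\X\times\Y)$, i.e.\ $R(\C,P^*_{XY})\ge R_X+R_Y-\delta$. Theorem 1 then applies directly to $\C$ with this particular dominant type and yields
$$
-\frac{1}{n}\log e(\C,W)\;\le\;\min_{V_{Z|XY}} D\bigl(V_{Z|XY}\,\|\,W\,\big|\,P^*_{XY}\bigr)+o(1),
$$
where the minimum is over all $V_{Z|XY}$ satisfying at least one of the three rate conditions listed in the statement of Theorem 2.

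Second, I would invoke Fact 1 to pin down the structural form of $P^*_{XY}$. Fact 1 asserts that a dominant type of a good code with rate pair $(R_X,R_Y)$ must admit some $P_{U|XY}$ with $X-U-Y$ and $R_X\le H(X|U)$, $R_Y\le H(Y|U)$. Hence the joint $P_{UXY}=P^*_{XY}\,P_{U|XY}$ is a feasible point for the outer maximization in Theorem 2, which gives the immediate comparison
$$
\min_{V_{Z|XY}} D\bigl(V_{Z|XY}\,\|\,W\,\big|\,P^*_{XY}\bigr)\;\le\;\max_{P_{UXY}}\min_{V_{Z|XY}} D\bigl(V_{Z|XY}\,\|\,W\,\big|\,P_{XY}\bigr)\;=\;E_{sp}(R_X,R_Y,W).
$$
Combining the two displayed inequalities and then letting $n\to\infty$ and $\delta\to 0$ produces the claimed bound on the reliability function uniformly over all good codes.

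The genuinely hard work is not in Theorem 2 itself but in the three ingredients it consumes: Theorem 1 (the fixed-dominant-type sphere-packing estimate, postponed to Appendix A.1) and the pair Lemma 1 / Fact 1 (which constrain the shape of any dominant type via the type-graph / near-complete-bipartite-subgraph argument established in \cite{nazaridistance09}). Once those are in place, Theorem 2 is purely a bookkeeping step; the single thing that still needs to be checked carefully is that the feasible set of the outer maximum in the theorem statement matches exactly the set of joint distributions that Fact 1 permits as dominant types, so that every code's unknown $P^*_{XY}$ is actually covered by the maximization.
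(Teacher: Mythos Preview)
Your proposal is correct and matches the paper's own approach: the paper does not give a separate proof of Theorem~\ref{packing} but simply states that it follows ``by combining the result of theorem~\ref{fixedtypepacking} and fact~\ref{U2-Lemma},'' which is precisely the chaining you describe. Your write-up is in fact more explicit than the paper's, correctly spelling out that Lemma~1 supplies the dominant type, Theorem~1 bounds the exponent in terms of that type, and Fact~1 certifies that the type lies in the feasible set of the outer maximization.
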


Thus far, we have obtained a lower bound on the average error
probability for all good multiuser codes with certain rate pairs.
Here, we show that the result of the previous theorem is indeed a
valid bound for any multiuser code regardless of whether it is good
or bad. This approach shows that for any bad code there exists a
good code with the same number of codewords and a better
performance. Therefore, to obtain a lower bound on the error
probability of the best code, we only need to consider codes without
any repeated codewords. In lemma~\ref{P2P}, we prove this result for
a single-user code and later, by using the result of lemma~\ref{P2P}
several times, we prove the same result for the multiuser scenario.

\begin{lemma}\label{P2P}
Suppose $C_X$ is a codebook of size $M_X$ for which all codewords
are selected from $T_{P_X}$. Moreover, suppose $\mathbf{x}_i$ is
repeated $N_i$ times in the codebook and $M_X=N_1+N_2+...+N_M$,
where M is the number of distinct sequences in $C_X$. If $M_X \leq
|T_{P_X}|-1$, there exists another code $C'_X$ with better
probability of error, such that
\begin{eqnarray}
|C_X|&=&|C'_X|\nonumber\\
N'_i&=&N_i\;\;\;\;\;\;\;  i=1,...,M-1\nonumber\\
N'_M&=&N_M-1\nonumber\\
N'_{M+1}&=&1
\end{eqnarray}
Here, $N'_{M+1}=1$ is the number of occurrences of the new sequence
$\mathbf{x} \in T_{P_X}$ which does not belong to $C_X$.
\end{lemma}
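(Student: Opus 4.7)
The plan is to exhibit $C'_X$ explicitly, replacing one copy of $\mathbf{x}_M$ by a fresh sequence $\mathbf{x}' \in T_{P_X} \setminus C_X$ and redistributing the decoding mass currently allotted to the duplicated codeword $\mathbf{x}_M$ via a likelihood-ratio test. Since $C_X$ contains at most $M \leq M_X \leq |T_{P_X}|-1$ distinct sequences, such a replacement $\mathbf{x}'$ is guaranteed to exist. I treat here the relevant case $N_M \geq 2$, which is the only one needed when iteratively reducing a bad codebook to a good one.

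Set $\hat{D}_M = \bigcup_{k : \mathbf{x}_k = \mathbf{x}_M} D_k$, the union of the $N_M$ pairwise disjoint decoding regions that presently serve the copies of $\mathbf{x}_M$. Because these $N_M$ messages are channel-equivalent, their aggregate contribution to $M_X \cdot e(C_X,W)$ is simply $N_M - W^n(\hat{D}_M \mid \mathbf{x}_M)$ and depends only on $\hat{D}_M$, not on how it is split among the $N_M$ indices. Define
\[
A \triangleq \left\{ \mathbf{z} \in \hat{D}_M : W^n(\mathbf{z} \mid \mathbf{x}') \geq W^n(\mathbf{z} \mid \mathbf{x}_M) \right\},
\]
and construct $C'_X$ by replacing one copy of $\mathbf{x}_M$ with $\mathbf{x}'$, assigning $A$ as the decoding region of $\mathbf{x}'$ and $\hat{D}_M \setminus A$ as the decoding region of any one of the remaining $N_M-1$ copies of $\mathbf{x}_M$ (the others receiving $\varnothing$, which is harmless since they carry identical codewords). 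All other decoding regions are left untouched, so disjointness is preserved and $C'_X$ is a valid code.

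A direct computation then shows that the contribution of the modified messages to $M_X \cdot e(C'_X, W)$ equals $N_M - W^n(\hat{D}_M \mid \mathbf{x}_M) + W^n(A \mid \mathbf{x}_M) - W^n(A \mid \mathbf{x}')$; subtracting the corresponding contribution from $C_X$, and noting that unchanged messages cancel, gives
\[
M_X \bigl[ e(C'_X, W) - e(C_X, W) \bigr] = W^n(A \mid \mathbf{x}_M) - W^n(A \mid \mathbf{x}') \leq 0,
\]
where the inequality follows pointwise from the definition of $A$. Hence $C'_X$ is at least as good as $C_X$, establishing the lemma.

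I anticipate no real obstacle; the only conceptual point worth emphasizing is that, since the $N_M$ duplicates of $\mathbf{x}_M$ are indistinguishable at the channel output, only the union $\hat{D}_M$ of their decoding regions matters, and a Neyman--Pearson-style partition of that union between $\mathbf{x}_M$ and the new $\mathbf{x}'$ can only improve (or preserve) the average error probability.
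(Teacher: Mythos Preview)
Your argument is correct, but it differs from the paper's. The paper does not partition $\hat{D}_M$ via a likelihood-ratio test; instead it uses the method of types to locate a single output sequence $\mathbf{y}$ in the ``maximum-likelihood'' conditional type shell $T_{V_0}(\mathbf{x})$, where $V_0 = \arg\min_V\{D(V\|W|P_X)+H(V|P_X)\}$, for which $W^n(\mathbf{y}\mid\mathbf{x}) \ge W^n(\mathbf{y}\mid\mathbf{x}_i)$ holds simultaneously against \emph{every} existing codeword $\mathbf{x}_i$. It then steals this single point $\mathbf{y}$ from whatever decoding region $D_k$ currently contains it (not necessarily one of the regions attached to $\mathbf{x}_M$) and assigns $D'_{M+1}=\{\mathbf{y}\}$. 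Your Neyman--Pearson construction is more elementary---it needs no type-class machinery and would work verbatim for codewords not of constant composition---and it only touches the decoding mass already allotted to the duplicated codeword $\mathbf{x}_M$, which makes the bookkeeping cleaner. The paper's route, by contrast, moves only one output point but exploits the constant-composition hypothesis to guarantee that point is favorable against all competitors at once; this is a slightly stronger structural fact, though not needed for the lemma as stated.
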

\begin{proof}
The proof is provided in Appendix~A.2.
\end{proof}

\begin{lemma}\label{MAC}
For any bad multi user code with codewords that belong to $T_{P_X}$,
and $T_{P_Y}$, with rate pair ($R_X$,$R_Y$), there exists a good
multi user code with the same rate pair and a better probability of
error.
\end{lemma}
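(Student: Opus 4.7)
The plan is to reduce Lemma~\ref{MAC} to iterated applications of the single-user Lemma~\ref{P2P}. Suppose $\mathcal{C}=\mathcal{C}_X\times \mathcal{C}_Y$ is bad; without loss of generality assume $\mathcal{C}_X$ is bad, with some codeword $\mathbf{x}_M$ appearing $N_M\geq 2$ times. I would keep $\mathcal{C}_Y$ fixed, replace a single copy of $\mathbf{x}_M$ by a fresh sequence $\mathbf{x}'\in T_{P_X}\setminus\mathcal{C}_X$, argue that the MAC average error does not increase, and iterate; once $\mathcal{C}_X$ has been made good, the symmetric procedure repairs $\mathcal{C}_Y$. Each step strictly decreases the total number of repetitions, so this process terminates in finitely many rounds. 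The hypothesis $M_X\leq |T_{P_X}|-1$ of Lemma~\ref{P2P} is essentially automatic: since $\mathcal{C}_X$ is bad it has at most $M_X-1$ distinct codewords, so $T_{P_X}\setminus \mathcal{C}_X\neq \varnothing$ as soon as $|T_{P_X}|\geq M_X$, which holds in the rate regime of interest.

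The key technical step is lifting Lemma~\ref{P2P} from the point-to-point setting to the MAC. Fixing $\mathcal{C}_Y$, the MAC average error decomposes as
\begin{equation*}
e(\mathcal{C},W)\;=\;\frac{1}{M_Y}\sum_{j=1}^{M_Y} e_j(\mathcal{C}_X), \qquad e_j(\mathcal{C}_X)\triangleq \frac{1}{M_X}\sum_{i=1}^{M_X} W^n\!\left(D_{i,j}^c\,\big|\,\mathbf{x}_i,\mathbf{y}_j\right).
\end{equation*}
For each $j$, the quantity $e_j(\mathcal{C}_X)$ is the average error of a single-user code with codebook $\mathcal{C}_X$ and decoding regions $\{D_{i,j}\}_i$ used over the induced DMC $W(\cdot|\cdot,\mathbf{y}_j)$, which is exactly the setting covered by Lemma~\ref{P2P}.

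The main obstacle is uniformization across $j$: Lemma~\ref{P2P} produces an improving replacement $\mathbf{x}'$ for one channel at a time, whereas the product structure of a MAC code forces a single $\mathbf{x}'$ to be shared across all $M_Y$ induced channels $W(\cdot|\cdot,\mathbf{y}_j)$. I would resolve this by reading the proof of Lemma~\ref{P2P} as an averaging statement: the single-user error, averaged over the uniform choice of $\mathbf{x}'\in T_{P_X}\setminus \mathcal{C}_X$, does not exceed the original error for \emph{any} fixed channel. Linearity of expectation then gives
\begin{equation*}
\mathbb{E}_{\mathbf{x}'}\!\left[e\bigl(\mathcal{C}'(\mathbf{x}'),W\bigr)\right] \;=\; \frac{1}{M_Y}\sum_{j=1}^{M_Y} \mathbb{E}_{\mathbf{x}'}\!\left[e_j\bigl(\mathcal{C}'_X(\mathbf{x}')\bigr)\right] \;\leq\; \frac{1}{M_Y}\sum_{j=1}^{M_Y} e_j(\mathcal{C}_X) \;=\; e(\mathcal{C},W),
\end{equation*}
so some deterministic $\mathbf{x}'$ works for the full MAC code. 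If the proof of Lemma~\ref{P2P} in Appendix~A.2 instead only exhibits a specific witness rather than bounding the average, a fallback is to apply Lemma~\ref{P2P} greedily at the worst-behaving index $j^*$ and control any potential degradation at the remaining indices using the margin gained at $j^*$. Either route yields $e(\mathcal{C}',W)\leq e(\mathcal{C},W)$ with $\mathcal{C}'=\mathcal{C}'_X\times \mathcal{C}_Y$ having strictly fewer repetitions than $\mathcal{C}$, and iterating the construction (and then applying the symmetric argument to $\mathcal{C}_Y$) completes the proof.
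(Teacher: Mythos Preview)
Your overall scheme---freeze one user's codebook, peel off a repeated codeword via Lemma~\ref{P2P}, and iterate until both codebooks are good---is exactly what the paper does; its entire proof of Lemma~\ref{MAC} is the single sentence ``apply Lemma~\ref{P2P} several times to any of the bad single-user codes.'' You actually go beyond the paper by isolating the genuine difficulty: a single replacement $\mathbf{x}'$ must serve all $M_Y$ rows at once, whereas Lemma~\ref{P2P} is stated for one fixed channel. The paper's one-line proof never addresses this.

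Neither of your proposed fixes quite closes the gap, though. First, the per-row channel $\mathbf{z}\mapsto W^n(\mathbf{z}\mid\mathbf{x},\mathbf{y}_j)=\prod_t W(z_t\mid x_t,y_{j,t})$ is a \emph{non-stationary} memoryless channel (a different single-letter law at each coordinate $t$), so it is not ``exactly the setting covered by Lemma~\ref{P2P}'': the identity $W^n(\mathbf{y}\mid\mathbf{x})=2^{-n[D(V\|W|P_X)+H(V|P_X)]}$ on which Appendix~A.2 rests breaks down. Second, A.2 does not prove an averaging bound over $\mathbf{x}'$; it shows that \emph{every} fresh $\mathbf{x}'\in T_{P_X}\setminus C_X$ works, with the stolen output point $\mathbf{y}\in T_{V_0}(\mathbf{x}')$ chosen as a function of the channel through $V_0$. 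That is stronger than you need in the $\mathbf{x}'$-direction, and it suggests the right route: fix one $\mathbf{x}'$ and, for each $j$, redo the A.2 construction with MAC types, choosing $\mathbf{z}_j$ so that $(\mathbf{x}',\mathbf{y}_j,\mathbf{z}_j)$ minimizes $D(V\|W|P_{XY})+H(V|P_{XY})$. The remaining snag is that the ML comparison $W^n(\mathbf{z}_j\mid\mathbf{x}',\mathbf{y}_j)\geq W^n(\mathbf{z}_j\mid\mathbf{x}_k,\mathbf{y}_j)$ in A.2 used that all codewords share the same type; here $(\mathbf{x}',\mathbf{y}_j)$ and $(\mathbf{x}_k,\mathbf{y}_j)$ need not share a joint $XY$-type, so the minimization has to range over joint types with the given $P_X,P_Y$ marginals rather than over $V$ alone. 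Your greedy fallback does not get around this either. In short: you and the paper share the same skeleton, you have correctly located the step the paper glosses over, but the step still needs a genuine MAC-level version of the A.2 argument.
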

\begin{proof}
For a bad multi user code, we know that at least one of the
individual codebooks is bad. If we apply lemma~\ref{P2P} several
times to any of the bad single user codes, with the appropriate
cardinality, we will end up with a good multiuser code and a better
probability of decoding error.
\end{proof}

Finally, by combining the result of lemma~\ref{MAC} and the result
of theorem~\ref{packing}, we deduce an upper bound on the
reliability function for all multiuser codes.

\begin{theorem}
For any given multiple access channel W, and any good multi user
code with rate pair ($R_X$,$R_Y$), the reliability function,
$E\left(R_X,R_Y,W\right)$, is bounded above by
\begin{eqnarray}
E_{sp}\left(R_X,R_Y,W\right) \triangleq \max_{P_{XY}}
\min_{V_{Z|XY}} D\left(V_{Z|XY}||W|P_{XY}\right).
\end{eqnarray}
 Here, the maximum is taken over all possible joint
distributions, and the minimum over all channels $V_{Z|XY}$ which
satisfy at least one of the following conditions
\begin{eqnarray}
I_V\left(X \wedge Z|Y\right) &\leq& R_X\nonumber\\
I_V\left(Y \wedge Z|X\right) &\leq& R_Y\nonumber\\
I_V\left(XY \wedge Z\right) &\leq& R_X+R_Y
\end{eqnarray}
\end{theorem}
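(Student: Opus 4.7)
The plan is to assemble the statement by combining the two ingredients established just above: Lemma~\ref{MAC}, which replaces any bad code by a good code of the same rate pair without increasing its average error probability, and Theorem~\ref{packing}, which provides the sphere packing upper bound for any good code. A final monotonicity step will allow dropping the auxiliary variable $U$ present in Theorem~\ref{packing} and produce the simpler expression stated here.

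First I would take an arbitrary multi user code $\C$ of parameter $(n,2^{nR_X},2^{nR_Y})$ and, if $\C$ is bad, iterate Lemma~\ref{MAC} to extract a good code $\C'$ with the same rate pair and $e(\C',W)\le e(\C,W)$. This reduction shows that when computing $E^*_{av}(R_X,R_Y)$ one may restrict attention to good codes. Next I would apply Theorem~\ref{packing} to $\C'$, yielding
\begin{equation*}
-\tfrac{1}{n}\log e(\C',W) \le \max_{P_{UXY}}\min_{V_{Z|XY}} D(V_{Z|XY}||W|P_{XY}) + o(1),
\end{equation*}
where the outer maximum is taken over laws with $X-U-Y$ and $R_X\le H(X|U)$, $R_Y\le H(Y|U)$, and the inner minimum is over $V_{Z|XY}$ satisfying at least one of the three mutual-information constraints. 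Finally, since the marginal $P_{XY}$ of any such $P_{UXY}$ lies in the set of all joint distributions on $\X\times\Y$, enlarging the outer feasible set to all $P_{XY}$ can only increase the maximum; so the inequality remains valid with the simpler maximization, and taking the supremum over codes together with $n\to\infty$ delivers the stated bound on $E^*_{av}$.

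The argument is almost entirely bookkeeping, since the real content lives in Lemma~\ref{MAC} and Theorem~\ref{packing}. The main place requiring care is the bad-to-good reduction: each invocation of Lemma~\ref{P2P} inside Lemma~\ref{MAC} requires a fresh sequence in the appropriate type class, so one needs $|\C_X|<|T_{P_X}|$ and $|\C_Y|<|T_{P_Y}|$, which is assured for any rates strictly below $\log|\X|$ and $\log|\Y|$ and $n$ large enough. I do not foresee any other obstacle beyond this housekeeping step.
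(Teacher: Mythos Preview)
Your proposal is correct and follows exactly the approach the paper indicates: the paper's ``proof'' is simply the one-line remark that the theorem is obtained ``by combining the result of Lemma~\ref{MAC} and the result of Theorem~\ref{packing}.'' Your write-up merely makes explicit the monotonicity step (enlarging the outer maximization from the $U$-constrained set of $P_{UXY}$ to all $P_{XY}$) and the cardinality proviso needed for Lemma~\ref{P2P}, both of which the paper leaves implicit.
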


\section{Another Sphere packing bound}\label{MaxvsAvg}
In point to point communications systems, one can show that a lower
bound for the maximal error probability of the best code is also a
lower bound on the average probability of error for such a code.
However, in multiuser communications, this is not the case. It has
been shown that for multiuser channels, in general, the maximal
error capacity region is smaller than the average error capacity
region~\cite{Dueck2}. Therefore, we cannot hope a sphere packing
bound for maximal error probability to be equal to the one for the
average probability of error. In the following, we show an approach
to derive an upper bound on the average error exponent by using a
known upper bound for the maximal error exponent.

\begin{lemma}
Fix any DM-MAC $W:\X \times \Y \rightarrow \Z$, $R_X \geq 0$, $R_Y
\geq 0$. Assume that, the maximal reliability function is bounded as
follows:
\begin{equation}
E^L_{m}\left(R_X,R_Y\right) \leq E^*_{m}\left(R_X,R_Y\right) \leq
E^U_{m}\left(R_X,R_Y\right),\label{number1}
\end{equation}
therefore, the average reliability function can be bounded by
\begin{equation}
E^L_{m}\left(R_X,R_Y\right) \leq E^*_{av}\left(R_X,R_Y\right) \leq
E^U_{m}\left(R_X,R_Y\right) + R,\label{number2}
\end{equation}
where $R = \min\{R_X,R_Y\}$. Similarly, if the average reliability
function is bounded as follows:
\begin{equation}
E^L_{av}\left(R_X,R_Y\right) \leq E^*_{av}\left(R_X,R_Y\right) \leq
E^U_{av}\left(R_X,R_Y\right),\label{number3}
\end{equation}
it can be concluded that the maximal reliability function satisfies
the following constraint
\begin{equation}
E^L_{av}\left(R_X,R_Y\right) -R \leq E^*_{m}\left(R_X,R_Y\right)
\leq E^U_{av}\left(R_X,R_Y\right).\label{number4}
\end{equation}
\end{lemma}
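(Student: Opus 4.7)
The entire lemma reduces to a single \emph{expurgation inequality} relating the average and maximal errors of any MAC code: for any $(n,M_X,M_Y)$ code $\C$ with average error $\bar e = e(\C,W)$, there exists a subcode $\C'$ obtained by discarding codewords of a single user whose maximal error satisfies
\begin{equation}
e_m(\C',W)\leq 2\cdot 2^{nR}\,\bar e,\qquad R=\min\{R_X,R_Y\},
\end{equation}
at the price of a rate loss of $O(1/n)$ on that user. The argument is Markov's inequality applied to $e_i \triangleq (1/M_Y)\sum_j \lambda_{ij}$, where $\lambda_{ij}=W^n(D_{ij}^c\mid \mathbf{x}_i,\mathbf{y}_j)$: at least $M_X/2$ indices $i$ satisfy $e_i\leq 2\bar e$, and for each such $i$, $\max_j \lambda_{ij}\leq 2 M_Y \bar e$. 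Retaining only those codewords in $\C_X$ produces a subcode of size $(\lceil M_X/2\rceil, M_Y)$ with $e_m\leq 2\cdot 2^{nR_Y}\bar e$; by the symmetric argument on $\C_Y$ one obtains $(M_X,\lceil M_Y/2\rceil)$ with $e_m\leq 2\cdot 2^{nR_X}\bar e$. Expurgating whichever user has the larger rate yields the $R=\min\{R_X,R_Y\}$ factor.

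Given this bridging step, the four inequalities in (\ref{number2}) and (\ref{number4}) follow quickly. The trivial estimate $e(\C,W)\leq e_m(\C,W)$ immediately gives $E^*_{av}(R_X,R_Y)\geq E^*_m(R_X,R_Y)$, from which both $E^L_m \leq E^*_{av}$ in (\ref{number2}) and $E^*_m\leq E^U_{av}$ in (\ref{number4}) are obtained. For the upper bound $E^*_{av}\leq E^U_m+R$ in (\ref{number2}), take any code $\C$ at rate $(R_X,R_Y)$ with average error $\bar e$, apply the expurgation inequality to get a code at rate $(R_X-\tfrac{1}{n},R_Y)$ or $(R_X,R_Y-\tfrac{1}{n})$ with maximal error at most $2\cdot 2^{nR}\bar e$, and use the hypothesized lower bound $e_m\geq 2^{-nE^U_m}$ at the perturbed rate pair to conclude $\bar e \geq 2^{-n(E^U_m+R)-1}$; taking $-\tfrac{1}{n}\log$ and $n\to\infty$ yields the claimed inequality. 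The lower bound $E^*_m\geq E^L_{av}-R$ in (\ref{number4}) is obtained analogously, by feeding a sequence of codes that saturates $E^L_{av}$ into the same expurgation step and reading the result as a code achieving maximal-error exponent at least $E^L_{av}-R$ at an $O(1/n)$-perturbed rate pair.

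The only real bookkeeping obstacle is that the expurgation shifts one user's rate by $1/n$, while the hypothesized bounds $E^U_m$ and $E^L_{av}$ are stated at the nominal pair $(R_X,R_Y)$. I will absorb this in the $n\to\infty$ limit using the monotonicity of the reliability functions in each rate argument and standard continuity of the bounds in $(R_X,R_Y)$; no other nontrivial step is required.
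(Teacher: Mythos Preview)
Your proposal is correct and follows essentially the same approach as the paper: both arguments expurgate half the codewords of the larger-rate user via Markov's inequality applied to the per-codeword average errors, then bound each individual $\lambda_{ij}$ by the number of terms times that average to obtain $e_m(\C')\le 2\cdot 2^{nR}\,e(\C,W)$ with $R=\min\{R_X,R_Y\}$. The only cosmetic difference is that the paper phrases the upper bound in~\eqref{number2} as a proof by contradiction, whereas you derive the expurgation inequality first and then read off all four inequalities directly; your treatment of the $O(1/n)$ rate shift is also more explicit than the paper's.
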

\begin{proof}
The proof is provided in Appendix~A.3.
\end{proof}
In~\cite{nazari08}, the authors derived a sphere packing bound on
the maximal reliability function for DM-MAC. This results is only a
valid upper bound for the maximal error reliability function and not
the average one. We can simply use the previous lemma to derive a
new upper bound on the average error reliability function for
DM-MAC.
\begin{theorem}
For any $R_X,R_Y
>0$, $\delta > 0$ and any DM-MAC, $W:\mathcal{X} \times \mathcal{Y}\rightarrow
\mathcal{Z}$, every $(n,M_X,M_Y)$ code, $\C$  with
\begin{subequations}
\begin{align}
\frac{1}{n} \log{M_X} &\geq R_X + \delta \\
\frac{1}{n} \log{M_Y} &\geq R_Y + \delta,
\end{align} 
\end{subequations}
has average probability of error
\begin{equation}
e(\C,W) \geq \frac{1}{2}\exp{\left(-n
\left(E^m_{sp}(R_X,R_Y,W)+R\right)\left(1+\delta\right)\right)},
\end{equation}
where $E^m_{sp}$ is the sphere packing bound derived
in~\cite{nazari08}, and $R=\min\{R_X,R_Y\}$.
\end{theorem}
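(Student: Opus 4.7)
The plan is to combine the maximal-error sphere packing bound of \cite{nazari08} with the expurgation-based maximal-to-average translation encoded in the preceding lemma (specifically, inequality~(\ref{number2})). Conceptually, the theorem is just the finite-blocklength form of the asymptotic chain $E^*_{av}(R_X,R_Y)\le E^*_m(R_X,R_Y)+R \le E^m_{sp}(R_X,R_Y,W)+R$, and the body of the argument simply tracks the constants through both inequalities.

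Concretely, I would start with an arbitrary code $\C$ satisfying the rate hypotheses and let $\bar e=e(\C,W)$. Assume without loss of generality $R=R_Y\le R_X$. A Markov-type expurgation (the same one used inside Appendix~A.3) produces a subset $\mathcal I\subseteq\{1,\dots,M_X\}$ with $|\mathcal I|\ge M_X/2$ such that the sub-code $\C'=\{\mathbf x_i:i\in\mathcal I\}\times\{\mathbf y_1,\dots,\mathbf y_{M_Y}\}$ obeys
\begin{equation*}
e_m(\C',W)\;\le\;2\bar e\cdot M_Y\;=\;2\bar e\cdot 2^{nR}.
\end{equation*}
The sub-code has rate pair $(R_X',R_Y)$ with $R_X'\ge R_X+\delta-\tfrac{1}{n}$, so for $n$ large it still satisfies the rate premises of the theorem after replacing $\delta$ by a slightly smaller $\delta'$. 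Applying the maximal-error sphere packing bound of \cite{nazari08} to $\C'$ yields
\begin{equation*}
e_m(\C',W)\;\ge\;\tfrac{1}{2}\exp\!\left(-nE^m_{sp}(R_X,R_Y,W)(1+\delta')\right).
\end{equation*}
Combining the two inequalities and solving for $\bar e$ gives
\begin{equation*}
\bar e\;\ge\;\tfrac{1}{4}\exp\!\left(-n\bigl(E^m_{sp}(R_X,R_Y,W)(1+\delta')+R\bigr)\right),
\end{equation*}
which, after absorbing the $\tfrac{1}{4}$ into the exponent, choosing $\delta'$ small enough that $E^m_{sp}(1+\delta')+R\le (E^m_{sp}+R)(1+\delta)$, and taking $n$ large, matches the claimed bound. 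The symmetric argument for the case $R_X\le R_Y$ (expurgating the $\mathbf y$-codewords instead) handles the other case.

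The main obstacle is bookkeeping rather than a conceptual difficulty: one must verify that the sphere packing bound of \cite{nazari08} is applicable to the expurgated sub-code, i.e., that the rate slack $\delta$ is large enough to absorb both the $1/n$ loss from the Markov halving and the gap between $E^m_{sp}(1+\delta')+R$ and $(E^m_{sp}+R)(1+\delta)$. This is where one must be careful to choose $\delta'$ as a function of $\delta$, $R$, and $E^m_{sp}$, and to require $n$ to be large in a quantitative way depending on these parameters. Once this is done, the $\tfrac{1}{2}$ prefactor and the $(1+\delta)$ factor in the statement follow automatically.
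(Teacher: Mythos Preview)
Your approach is essentially the same as the paper's. The paper does not give a separate proof of this theorem; it simply invokes the preceding lemma (whose proof, in Appendix~A.3, is exactly the Markov-expurgation argument you describe) together with the maximal-error sphere packing bound of \cite{nazari08}. The only cosmetic difference is that the paper phrases the expurgation step as a proof by contradiction and, under the WLOG assumption $R_X\le R_Y$, expurgates the $Y$-codebook and multiplies by $M_X$, whereas you assume $R_Y\le R_X$, expurgate the $X$-codebook, and multiply by $M_Y$; these are the same argument with the roles of the two users swapped. One small point to tighten: when you write $M_Y=2^{nR}$ you are implicitly using $\tfrac{1}{n}\log M_Y = R_Y$, while the hypothesis only gives $\tfrac{1}{n}\log M_Y\ge R_Y+\delta$; the extra $\delta$ has to be absorbed into the $(1+\delta)$ factor in the exponent along with the other slack you already identified, so be explicit about this when you do the bookkeeping.
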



\section{Appendix}
\subsection{Appendix A.1}
For a given MAC $W:\mathcal{X} \times \mathcal{Y} \rightarrow
\mathcal{Z}$ and a  good multi user code $C=C_X \times C_Y$, where
$C_X =\{\ \textbf{x}_i \in \X^n :\;\; i=1,...,M_X\}$ and $C_Y =\{\
\textbf{y}_j \in \Y^n :\;\; j=1,...,M_Y\}$, with decoding sets
$D_{i,j} \subset \mathcal{Z}^n$, we have
\begin{eqnarray}
e\left(C,W\right)&=&\frac{1}{M_XM_Y}\sum_{i=1}^{M_X}\sum_{j=1}^{M_Y}W\left(D_{i,j}^c|\mathbf{x}_i,\mathbf{x}_j\right)\\
&=&\frac{1}{M_XM_Y}\sum_{P_{XY}}
\frac{M_{XY}}{M_{XY}}\sum_{\left(i,j\right)\in
C_{XY}}W\left(D_{i,j}^c|\mathbf{x}_i,\mathbf{x}_j\right)
\end{eqnarray}
where $\C_{XY}$ is the set that includes all pairs in $\C_X\times
\C_Y$ which have the same type $P_{XY}$, $M_{XY}$ denotes the
cardinality of this set, and $R_{XY}=\frac{1}{n} \log{M_{XY}}$. For
a fixed $\left(i,j\right)$,
$T_V\left(\mathbf{x}_i,\mathbf{x}_j\right)$s are disjoint subsets of
$\Z^n$ for different conditional types $V: \X \times \Y \rightarrow
\Z$. Therefore,
\begin{eqnarray}
e\left(C,W\right)&=&\frac{1}{M_XM_Y}\sum_{P_{XY}}\frac{M_{XY}}{M_{XY}}\sum_{\left(i,j\right)\in
C_{XY}}\sum_{V}W\left(D_{i,j}^c \cap
T_V\left(\mathbf{x}_i,\mathbf{y}_j\right)|\mathbf{x}_i,\mathbf{y}_j\right)\\
&=&\frac{1}{M_XM_Y}\sum_{P_{XY}}
M_{XY}\sum_{V}\frac{1}{M_{XY}}\sum_{\left(i,j\right)\in
C_{XY}}W\left(T_V\left(\mathbf{x}_i,\mathbf{y}_j\right)|i,j\right)\frac{|D_{i,j}^c \cap T_V\left(\mathbf{x}_i,\mathbf{y}_j\right)|}{|T_V\left(\mathbf{x}_i,\mathbf{y}_j\right)|}\\
&=& \frac{1}{M_XM_Y}\sum_{P_{XY}} M_{XY}\sum_{V}
2^{-nD\left(V||W|P_{XY}\right)}[1-\frac{1}{M_{XY}}\sum_{\left(i,j\right)\in
C_{XY}}\frac{|D_{i,j} \cap
T_V\left(\mathbf{x}_i,\mathbf{y}_j\right)|}{|T_V\left(\mathbf{x}_i,\mathbf{y}_j\right)|}]\\
&\geq& \frac{1}{M_XM_Y}\sum_{P_{XY}} M_{XY}\sum_{V}
2^{-nD\left(V||W|P_{XY}\right)}[1-\frac{1}{M_{XY}}\sum_{\left(i,j\right)\in
C_{XY}}\frac{|D_{i,j} \cap
T_V\left(\mathbf{x}_i,\mathbf{y}_j\right)|}{2^{nH\left(Z|X,Y\right)}}]\\
&=& \frac{1}{M_XM_Y}\sum_{P_{XY}} M_{XY}\sum_{V}
2^{-nD\left(V||W|P_{XY}\right)}[1-\frac{1}{M_{XY}}
\frac{|\bigcup_{\left(i,j\right)
\in C_{XY}} D_{i,j}\cap T_V\left(\mathbf{x}_i,\mathbf{y}_j\right)|}{2^{nH\left(Z|X,Y\right)}}]\\
&\geq& \frac{1}{M_XM_Y}\sum_{P_{XY}} M_{XY}\sum_{V}
2^{-nD\left(V||W|P_{XY}\right)}[1-\frac{1}{M_{XY}}
\frac{|T_Z|}{2^{nH\left(Z|X,Y\right)}}]\\
&\geq&\frac{1}{M_XM_Y}\sum_{P_{XY}} M_{XY}\sum_{V}
2^{-nD\left(V||W|P_{XY}\right)}[1-\frac{1}{M_{XY}}
\frac{2^{nH\left(Z\right)}}{2^{nH\left(Z|X,Y\right)}}]\\
&=& \frac{1}{M_XM_Y}\sum_{P_{XY}} M_{XY}\sum_{V}
2^{-nD\left(V||W|P_{XY}\right)}[1-2^{-n[R_{XY}-I_V\left(XY \wedge
Z\right)]}]
\end{eqnarray}


 We define
\begin{eqnarray}
V_{bad}^{XY}=\{ V : R_{XY}\geq I_V\left(XY \wedge Z\right)\}
\end{eqnarray}
So, form the last inequality,
\begin{eqnarray}
e\left(C,W\right) &\geq& \frac{1}{M_X M_Y} \sum_{P_{XY}}M_{XY}
\sum_{V \in
V_{bad}^{XY}} 2^{-n D\left(V||W|P_{XY}\right)}\\
&\geq& \frac{1}{M_X M_Y} \sum_{P_{XY}}M_{XY} 2^{-n[\min_{V \in
V_{bad}^{XY}}D\left(V||W|P_{XY}\right)]}\\
&=& \frac{1}{M_X M_Y} \sum_{P_{XY}} 2^{-n[\min_{V \in
V_{bad}^{XY}}D\left(V||W|P_{XY}\right)-R_{XY}]}\\
&\geq& \frac{1}{M_X M_Y} 2^{-n[\min_{P_{XY}}\min_{V \in
V_{bad}^{XY}}D\left(V||W|P_{XY}\right)-R_{XY}]}
\end{eqnarray}
Thus,
\begin{equation}
e\left(C,W\right) \geq 2^{-n[\min_{P_{XY}}\min_{V \in
V_{bad}^{XY}}D\left(V||W|P_{XY}\right)+R_X+R_Y-R_{XY}]}
\end{equation}
On the other hand, if we use the fact that $D_{i,j}^c \subseteq
\bigcup_{j'}\bigcup_{i'\neq i} D_{i',j'}$, we can conclude
\begin{eqnarray}
e\left(C,W\right)&=&\frac{1}{M_XM_Y}\sum_{i=1}^{M_X}\sum_{j=1}^{M_Y}W\left(D_{i,j}^c|\mathbf{x}_i,\mathbf{y}_j\right)\\
&\geq&  \frac{1}{M_Y}\sum_{j=1}^{M_Y} \frac{1}{M_X}\sum_{i=1}^{M_X}
W\left(\bigcup_{j'}\bigcup_{i'\neq i} D_{i',j'}|\mathbf{x}_i,\mathbf{y}_j\right)\\
\text{Define} D_i^c\triangleq \bigcup_{j'}\bigcup_{i'\neq i} D_{i',j'}\\
&=& \frac{1}{M_XM_Y}\sum_{P_{XY}}\sum_{i}\sum_{j:\left(i,j\right)\in
C_{XY}}\sum_{V}W\left(D_i^c \cap T_V\left(\mathbf{x}_i,\mathbf{y}_j\right)|\mathbf{x}_i,\mathbf{y}_j\right)\\
&=& \frac{1}{M_XM_Y}\sum_{P_{XY}}\sum_{V} 2^{-n
D\left(V||W|P_{XY}\right) }\sum_{i}\sum_{j:\left(i,j\right)\in
C_{XY}} \frac{|D_i^c \cap T_V\left(\mathbf{x}_i,\mathbf{y}_j\right)|}{|T_V\left(\mathbf{x}_i,\mathbf{y}_j\right)|}\\
&=& \sum_{P_{XY}}\sum_{V} 2^{-n D\left(V||W|P_{XY}\right)}
\frac{1}{M_XM_Y} \sum_{i}\sum_{j:\left(i,j\right)\in
C_{XY}}[1-\frac{|D_i \cap
T_V\left(\mathbf{x}_i,\mathbf{y}_j\right)|}{|T_V\left(\mathbf{x}_i,\mathbf{y}_j\right)|}]\\
&=& \sum_{P_{XY}}\sum_{V} 2^{-n D\left(V||W|P_{XY}\right)}
\frac{M_{XY}}{M_XM_Y} [1-\frac{1}{M_{XY}}
\sum_{i}\sum_{j:\left(i,j\right)\in
C_{XY}} \frac{|D_i \cap T_V\left(\mathbf{x}_i,\mathbf{y}_j\right)|}{|T_V\left(\mathbf{x}_i,\mathbf{y}_j\right)|}]\;\;\;\;\;\;\;\;\;\\
&\geq& \sum_{P_{XY}}\sum_{V} 2^{-n D\left(V||W|P_{XY}\right)}
\frac{M_{XY}}{M_XM_Y} [1-\frac{1}{M_{XY}}\sum_{i=1}^{M_X}
\sum_{j=1}^{M_Y} \frac{|D_i \cap T_V\left(\mathbf{x}_i,\mathbf{y}_j\right)|}{|T_V\left(\mathbf{x}_i,\mathbf{y}_j\right)|}]\\
&\geq&\sum_{P_{XY}}\sum_{V} 2^{-n D\left(V||W|P_{XY}\right)}
\frac{M_{XY}}{M_XM_Y} [1-\frac{1}{M_{XY}} \sum_{j=1}^{M_Y}
\sum_{i=1}^{M_X} \frac{|D_i
\cap T_V\left(\mathbf{x}_i,\mathbf{y}_j\right)|}{2^{nH\left(Z|X,Y\right)}}]\\
&\geq& \sum_{P_{XY}}\sum_{V} 2^{-n D\left(V||W|P_{XY}\right)}
\frac{M_{XY}}{M_XM_Y} [1-\frac{1}{M_{XY}} \sum_{j=1}^{M_Y}
\frac{2^{nH\left(Z,X|Y\right)}}{2^{nH\left(Z|X,Y\right)}}]\\
&\geq&\sum_{P_{XY}}\sum_{V} 2^{-n D\left(V||W|P_{XY}\right)}
\frac{M_{XY}}{M_XM_Y} [1-2^{-n[R_{XY}-R_Y-I_V\left(Z\wedge
X|Y\right)]}]
\end{eqnarray}
 and now, let us define
\begin{eqnarray}
V_{bad}^{X}\triangleq \{ V : R_{XY}-R_Y\geq I_V\left(Z\wedge
X|Y\right)\}
\end{eqnarray}
Hence, it easily can be seen
\begin{eqnarray}
e\left(C,W\right) &\geq& \frac{1}{M_X M_Y} \sum_{P_{XY}}M_{XY}
\sum_{V \in
V_{bad}^{X}} 2^{-n D\left(V||W|P_{XY}\right)}\\
&\geq&\frac{1}{M_X M_Y} \sum_{P_{XY}}M_{XY} 2^{-n[\min_{V \in
V_{bad}^{X}}D\left(V||W|P_{XY}\right)]}\\
&=& \frac{1}{M_X M_Y} \sum_{P_{XY}} 2^{-n[\min_{V \in
V_{bad}^{X}}D\left(V||W|P_{XY}\right)-R_{XY}]}\\
&\geq& \frac{1}{M_X M_Y} 2^{-n[\min_{P_{XY}}\min_{V \in
V_{bad}^{X}}D\left(V||W|P_{XY}\right)-R_{XY}]}
\end{eqnarray}
So,
\begin{equation}
e\left(C,W\right) \geq 2^{-n[\min_{P_{XY}}\min_{V \in
V_{bad}^{X}}D\left(V||W|P_{XY}\right)+R_X+R_Y-R_{XY}]}
\end{equation}
 Using the same idea for $Y$ and defining $D_j^c\triangleq
\bigcup_{i'}\bigcup_{j'\neq j} D_{i',j'}$, we can easily see
\begin{eqnarray}
e\left(C,W\right) &\geq& \frac{1}{M_X M_Y} \sum_{P_{XY}}M_{XY}
\sum_{V \in
V_{bad}^{Y}} 2^{-n D\left(V||W|P_{XY}\right)}\\
&\geq& \frac{1}{M_X M_Y} \sum_{P_{XY}}M_{XY} 2^{-n[\min_{V \in
V_{bad}^{Y}}D\left(V||W|P_{XY}\right)]}\\
&=& \frac{1}{M_X M_Y} \sum_{P_{XY}} 2^{-n[\min_{V \in
V_{bad}^{Y}}D\left(V||W|P_{XY}\right)-R_{XY}]}\\
&\geq& \frac{1}{M_X M_Y} 2^{-n[\min_{P_{XY}}\min_{V \in
V_{bad}^{Y}}D\left(V||W|P_{XY}\right)-R_{XY}]}
\end{eqnarray}
So,
\begin{equation}
e\left(C,W\right) \geq 2^{-n[\min_{P_{XY}}\min_{V \in
V_{bad}^{Y}}D\left(V||W|P_{XY}\right)+R_X+R_Y-R_{XY}]}
\end{equation}
where
\begin{eqnarray}
V_{bad}^{Y}=\{ V : R_{XY}-R_X\geq I_V\left(Z\wedge Y|X\right)\}
\end{eqnarray}
From (64),(81),(86),
\begin{eqnarray}
e\left(C,W\right) \geq 2^{-n[ \min_{P_{XY}} \min_{V \in V_{bad}^{X}
\cup V_{bad}^{Y}\cup V_{bad}^{XY}
}D\left(V||W|P_{XY}\right)+R_X+R_Y-R_{XY}]}.
\end{eqnarray}
Equivalently, for the exponent of $e\left(C,W\right)$
\begin{equation}
E\left(C,W\right) \leq \min_{P_{XY}} \min_{V \in V_{bad}^{X} \cup
V_{bad}^{Y}\cup V_{bad}^{XY}} D\left(V \|W
|P_{XY}\right)+R_X+R_Y-R_{XY}
\end{equation}
If we define $V_{bad}=V_{bad}^{X} \cup V_{bad}^{Y}\cup V_{bad}^{XY}
$, for every code $C$, we have
\begin{eqnarray}
E\left(C,W\right) &\leq& \max_C \;\;\;\;\;\min_{P_{XY}} \min_{V \in
V_{bad}} D\left(V \|W
|P_{XY}\right)+R_X+R_Y-R_{XY}\\
&=& \max_{\underline{R} \in \mathcal{R}} \;\;\;\;\;\min_{P_{XY}}
\min_{V \in V_{bad}} D\left(V \|W |P_{XY}\right)+R_X+R_Y-R_{XY}
\end{eqnarray}
Where $\underline{R}$ is a vector with elements
$R\left(C,P_{XY}\right)$ and $\mathcal{R}$ is the set of all
possible vectors $\underline{R}$. The last inequality follows from
the fact that $E\left(C,W\right)$ is only a function of $R_{XY}$s.
Since $P_{XY}^*$ is the dominant type of the code, we conclude that
\begin{eqnarray}
E\left(C,W\right) &\leq& \max_{\underline{R} \in \mathcal{R}} \;\;
\min_{V \in
V_{bad}} D\left(V \|W |P_{XY}^*\right)+R_X+R_Y-R_{XY}^*\\
&=& \max_{\underline{R} \in \mathcal{R}} \;\; \min_{V \in V_{bad}}
D\left(V \|W |P_{XY}^*\right).
\end{eqnarray}
However, this expression does not depend on $\underline{R}$.
Therefore
\begin{eqnarray}
E\left(C,W\right) &\leq& \min_{V \in V_{bad}} D\left(V \|W
|P_{XY}^*\right),
\end{eqnarray}
where $V_{bad}=\{V:I_V\left(XY\wedge Z\right) \leq R_X+R_Y \;\; or
\;\; I_V\left(Y\wedge Z| X\right) \leq R_Y \;\;or \;\;
I_V\left(X\wedge Z |Y\right) \leq
R_X\}$\\

\subsection{Appendix A.2}
Suppose the decoding regions for $C_X$ are $D_1,D_2,...D_M$. Hence,
\begin{eqnarray}
e\left(C_X,W\right)&=&\frac{1}{M_X}\sum_{i=1}^{M_X}W\left(D_{i}^c|i\right)\nonumber\\
&=&\frac{1}{M_X}[\sum_{i=1}^{M}
\left(N_iW\left(D_i^c|\textbf{x}_i\right)+\left(N_i-1\right)W\left(D_i|\textbf{x}_i\right)\right)]\nonumber\\
&=&
\frac{1}{M_X}\left(M_X-M+\sum_{i=1}^{M}W\left(D_i^c|\textbf{x}_i\right)\right).
\end{eqnarray}
Let us randomly choose $\textbf{x} \in T_{P_X}$ that does not belong
to $C_X$. Define
\begin{eqnarray}
V_0 \triangleq
\arg\min_{V}\{D\left(V||W|P_X\right)+H\left(V|P_X\right)\}
\end{eqnarray}
it is proved that if $\textbf{y} \in T_{V_0}\left(\textbf{x}\right)$
\begin{eqnarray}
W^n\left(\textbf{y}|\textbf{x}\right)&=&2^{-n[min_{V}\{D\left(V||W|P_X\right)+H\left(V|P_X\right)\}]} \nonumber\\
&\geq& 2^{-n[\{D\left(V||W|P_X\right)+H\left(V|P_X\right)\}]} \;\;\;\;\; \text{any }  V \nonumber\\
&=& W^n\left(\textbf{y}|\textbf{x}'\right)
\end{eqnarray}
for some $\textbf{x}'$ such that $\textbf{y}\in
T_V\left(\textbf{x}'\right)$. Thus,
\begin{eqnarray}
W^n\left(\textbf{y}|\textbf{x}\right) \geq
W^n\left(\textbf{y}|\textbf{x}_i\right) \;\;\;\;\;\; \;\;\;\;\;
\;\;\;
 \text{any }  i=1,...M
\end{eqnarray}
Choose $\textbf{y} \in T_{V_0}\left(\textbf{x}\right)\cap D_k$ for
some $k$ with $|D_k| \geq 2$. Now, let us look at $\C'_X$ which
contains all codewords in $\C_X$ except one of the repeated ones,
i.e one of the $\textbf{x}_M$ which is replaced with $\textbf{x}$,
and define the decoding sets
\begin{eqnarray}
D'_i&=&D_i
\;\;\;\;\;\;\;\;\;\;\;\;\;\;\;\;\;\;\;\;\;\;\;\;\;\;\;\;\;\;\;\;\;\;\;\;\;i\neq
k\\
D'_k&=&D_k -\{\textbf{y}\}\\
D'_{M+1}&=& \{\textbf{y}\}\;\;\;\;\;\;\;\;\;\;\;\;\;\;\;\;\;\;\;\;\;
\text{where }\textbf{x}'_{M+1} \triangleq \textbf{x}.
\end{eqnarray}
By following a similar approach, we conclude that
\begin{align}
e'(C_X,W) &= \frac{1}{M_X}\Big(M_X-M+\sum_{i=1,i\neq
k}^{M}W(D_i^c|\textbf{x}_i)+W(D_k^{'c}|\textbf{x}_k)-W(\textbf{y}|\textbf{x})\Big)\nonumber\\
&=
\frac{1}{M_X}\Big(M_X-M+\sum_{i=1}^{M}W(D_i^c|\textbf{x}_i)W(D_k^{'c}|\textbf{x}_k)-W(\textbf{y}|\textbf{x})-W(D_k^c|\textbf{x}_k)\Big)\nonumber\\
&=e(C_X,W)+\frac{1}{M_X}\Big(W(\textbf{y}|\textbf{x}_k)-W(\textbf{y}|\textbf{x})\Big)\nonumber\\
&\leq e(C_X,W),
\end{align}
where the last inequality follows from the fact that
$W\left(\textbf{y}|\textbf{x}_k\right)\leq
W\left(\textbf{y}|\textbf{x}\right)$.

\subsection{Appendix A.3}
The left hand side of~\eqref{number2} is straightforward, since for
all multiuser codes, $\C$, $e_m(\C,W) \geq e(\C,W)$.
By~\eqref{number1}, for all multiuser codes with rate pair
$(R_X,R_Y)$, we can conclude that
\begin{eqnarray}
e_m(\C,W) \geq 2^{-nE^U_{m}(R_X,R_Y)}.\label{assump}
\end{eqnarray}
Let us assume that there exists a code $\C$ with rate pair
$(R_X,R_Y)$ for which the right hand side of~\eqref{number2} does
not hold. Without loss of generality, we assume $R_X \leq R_Y$. For
Therefore,
\begin{eqnarray*}
e(\C,W) < \frac{1}{2} 2^{-n \left(E^U_{m}(R_X,R_Y) + R_X \right)},
\end{eqnarray*}
which is equivalent to
\begin{eqnarray}
\frac{1}{M_XM_Y}\sum_{i=1}^{M_X}\sum_{j=1}^{M_Y}W\left(D_{i,j}^c|\mathbf{x}_i,\mathbf{x}_j\right)
< \frac{1}{2} 2^{-n \left(E^U_{m}(R_X,R_Y) + R_X \right)},
\end{eqnarray}
which can be written as
\begin{eqnarray}
\frac{1}{M_Y}\sum_{j=1}^{M_Y}\frac{1}{M_X}\sum_{i=1}^{M_X}W\left(D_{i,j}^c|\mathbf{x}_i,\mathbf{x}_j\right)
< \frac{1}{2} 2^{-n \left(E^U_{m}(R_X,R_Y) + R_X \right)},
\end{eqnarray}
therefore, there exist $M^1_Y \geq \frac{M_Y}{2}$ codewords in
$\C_Y$ that satisfy
\begin{eqnarray}
\frac{1}{M_X}\sum_{i=1}^{M_X}W\left(D_{i,j}^c|\mathbf{x}_i,\mathbf{x}_j\right)
<  2^{-n \left(E^U_{m}(R_X,R_Y) + R_X \right)}.\label{multeq}
\end{eqnarray}
Let us call this set of codewords as $C^1_Y$. By multiplying both
sides of~\eqref{multeq} with $M_X$, and considering the fact that
all terms in summation are non-negative, it can be concluded that
for every $\mathbf{x}_i \in \C_X$, $ \mathbf{y}_j \in \C^1_Y$,
\begin{eqnarray}
W\left(D_{i,j}^c|\mathbf{x}_i,\mathbf{x}_j\right) <  2^{-n
\left(E^U_{m}(R_X,R_Y) \right)}.
\end{eqnarray}
Therefore, the new multiuser code $\C^1=\C_X \times \C^1_Y$, has a
rate pair very close to the original code, and its maximal
probability of error satisfies
\begin{eqnarray}
e_m(\C^1,W) < 2^{-n \left(E^U_{m}(R_X,R_Y)
\right)}.\label{contradiction}
\end{eqnarray}
\eqref{contradiction} contradicts our assumption in~\eqref{assump},
therefore it can be concluded that the assumption must be false and
that its opposite must be true. Similarly, we can show the bounds
in~\eqref{number4} by assumption in~\eqref{number3}.

\bibliographystyle{plain}
\bibliography{ali}

\end{document}


\section{A theorem}

\stmt{thrm}{sample}{Socrates is mortal.}

\refstmt{sample} can either be proven using data (the fact that
Socrates is dead), or by the proof which is provided in the appendix.

\section{Appendix}
In this appendix, the reader will find proofs of theorems not given in the text.

\stmtproof{sample}{Socrates is a man. All men are mortal.}

\rptstmtwithproof{sample}

\comment{Notice that you can put the \stmtproof{sample}{...} anywhere
you want, including right after your statement, just before using
\rptstmtwithproof, or a separate file (then use \input{proofs.tex}). That
way, neither your finished paper nor your source will be cluttered with
proofs.}